\newcommand{\cancel}[1]{}
\newcommand{\myleft}{\mathrm{left}}
\newcommand{\myright}{\mathrm{right}}
\newcommand{\mycomment}[2]{\footnote{\textcolor{blue}{#1 #2}}}
\newif\ifComments
  \newcommand{\hung}[1]{\mycomment{Hung:}{#1}}
  \newcommand{\marcel}[1]{\mycomment{Marcel:}{#1}}
  \newcommand{\matias}[1]{\mycomment{Matias:}{#1}}
  \newcommand{\hung}[1]{}
  \newcommand{\marcel}[1]{}
  \newcommand{\matias}[1]{}
\newcommand{\etal}{\emph{et al.}\xspace}
\newcommand{\tsat}{\textsc{3SAT}}
\newcommand{\ptsat}{\textsc{planar 3SAT}}
\newcommand{\calC}{{\cal C}}
\newcommand{\calF}{{\cal F}}
\theoremstyle{plain}
\newtheorem{theorem}{Theorem}
\newtheorem{lemma}[theorem]{Lemma}
\newtheorem{corollary}[theorem]{Corollary}
\title{Line Segment Covering of Cells in Arrangements\thanks{A preliminary version of this paper appeared in the proceedings of the COCOA 2015 conference~\cite{cocoa}}}
\author{Matias Korman \and Sheung-Hung Poon \and Marcel Roeloffzen}
\begin{document}

\maketitle

\begin{abstract}
Given a collection $L$ of line segments, we consider its arrangement and study the problem of covering all cells with line segments of $L$. That is, we want to find a minimum-size set $L'$ of line segments such that every cell in the arrangement has a line from $L'$ defining its boundary. We show that the problem is NP-hard, even when all segments are axis-aligned. In fact, the problem is still NP-hard when we only need to cover rectangular cells of the arrangement. For the latter problem we also show that it is fixed parameter tractable with respect to the size of the optimal solution. Finally we provide a linear time algorithm for the case where cells of the arrangement are created by recursively subdividing a rectangle using horizontal and vertical cutting segments.
\end{abstract}

\section{Introduction}
\label{sec:Introduction}

Set cover~\cite{clrs} is one of the most fundamental problems of computer science. This problem is usually formulated in terms of hypergraphs: the input of the problem is a hypergraph $\mathcal{H}=(X,\calF)$ where $\calF \subseteq 2^X$ is a collection of subsets of $X$, and we aim for a subset $\calF' \subseteq \calF$ of smallest cardinality that {\em covers} $X$ (i.e., $\cup_{F \in \calF'} F = X$). This problem is known to be NP-hard and even hard to approximate~\cite{clrs,f1998}.

Given its importance, it is not surprising that this problem has been studied extensively. In most cases, the set $\calF$ is given implicitly (this is specially true when considering geometric variants of the problem). For example, in the well-known $k$-center problem~\cite{inapp-center} we want to cover a set $S$ of $n$ points with unit disks. In the hypergraph definition, this is equivalent to $X=S$ and $\calF$ is the collection of subsets of $S$ that can be covered with a single unit disk.

Sometimes the relationship between $X$ and $\calF$ is much more involved. For example, in the {\em discrete center problem}, we only consider the disks whose center is a point of $S$. Akin to the discrete variant of the $k$-center problem, in this paper we study a geometric setting where the elements $X$ and sets $\calF$ are defined by the same geometric primitives. Specifically, we study the problem of covering the cells of an arrangement of line segments $L$ with segments of $L$. Given a set $L$ of line segments in the plane a \emph{cell} in the arrangement of $L$ is defined as a maximally connected region that is not intersected by any segments of $L$. Essentially the cells are the `empty'---not intersected by segments of $L$---regions in the arrangement defined by $L$. Now let $C$ denote the set of all cells in the arrangement of $L$. We say that a cell $c \in C$ is {\em covered} by a line segment $\ell \in L$ if and only if $\ell$ is part of the boundary of $c$. Similarly $c$ is covered by a set $L'$ of line segments if and only if there is a segment $\ell \in L'$ that covers $c$. The goal is then to find a minimum-size set $L' \subset L$ that covers all cells of $C$. We call this the \emph{line-segment covering} problem.

The problem can also be viewed as a guarding problem. In the traditional art gallery problem, the goal is to place guards so that the guards together see the whole gallery (often a simple polygon). Many variants of this have been studied. Bose~\etal~\cite{bcchklt} study guarding and coloring problems between lines. They provide results for several types of guards and objects to guard, such as guarding the cells of the arrangement with the lines, or guarding the lines by selecting cells. Their results however do not extend to line segments as they use properties of the lines that do not hold for line segments. To the best of our knowledge covering cells in an arrangement of line segments with the segments has not been studied before.

We study three different variants of this problem. First, in Section~\ref{sec:npc-rectilinear} we show that the line-segment covering problem is NP-hard, even when all segments of $L$ are axis-aligned. In Section~\ref{sec:fpt-algo} we consider a slightly different variant, where we are required to cover only rectangular cells, those defined by four line segments. For this variant we show that the NP-hardness reduction still works. However, we show that this variant is fixed parameter tractable with respect to the size of the optimal solution. In Section~\ref{sec:rect-bsp}, inspired by subdivisions induced by KD-trees, we study a variant where the line segments define a type of rectangular subdivision. That is, an axis aligned rectangle that is recursively subdivided with horizontal or vertical line segments, similar to the subdivision defined by a KD-tree~\cite{bcko}. For this case we show that an optimal cover can be computed in linear time, assuming that the partitioning is given as a tree-structure defined by the splitting lines.

\section{NP-hardness for Rectilinear Line Segments}
\label{sec:npc-rectilinear}

\begin{figure}[tb]
  \centering
  \includegraphics{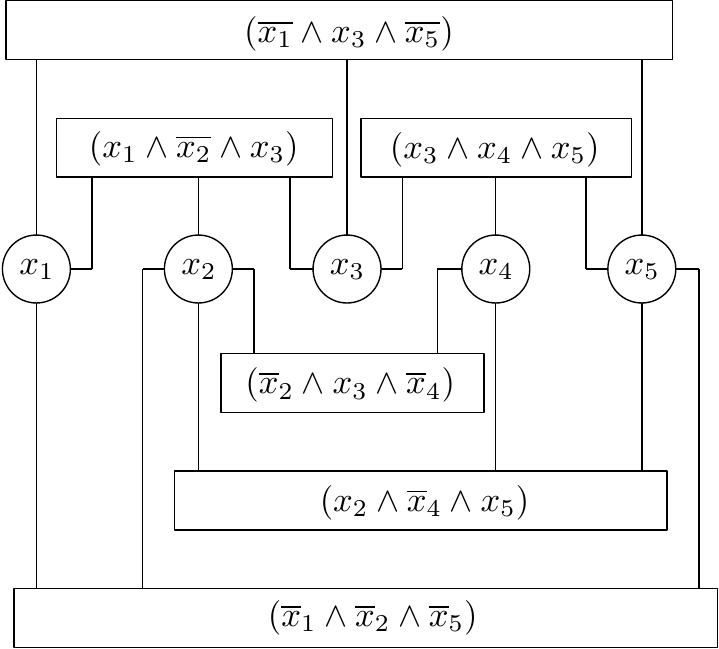}
  \caption{$\ptsat$ problem instance along with a planar embedding.}
  \label{fig:planar3sat}
\end{figure}

In this section we show that the line-segment covering problem is NP-hard, even if the input consists of only horizontal and vertical line segments. We reduce the problem from $\ptsat$~\cite{planar-3SAT}. An input instance for the $\tsat$ problem is a set $\{ x_1, x_2, \dots, x_n \}$ of $n$ variables and a Boolean expression in conjunctive normal form. That is, the expression is a conjunction of clauses $\Phi= c_1 \wedge \ldots \wedge c_m$ such that each clause $c_i$ is a disjunction of three literals (a variable or negation of a variable). The problem is then to decide if there is a truth assignment for the variables so that $\Phi$ is true. In $\ptsat$ we impose further restrictions by looking at the representation of $\Phi$ as a bipartite graph with variables and clauses as vertices. A variable-node $v$ is connected to a clause-node $c$ if and only if $v$ occurs in $c$. In $\ptsat$ we assume that this graph is planar. Specifically we assume that a planar embedding is given that places all variable-nodes on a horizontal line and all clause-nodes above or below this line (see Fig.~\ref{fig:planar3sat}). We also assume that no variable appears more than once in any clause (that is, the above described bipartite graph is a proper graph and not a multigraph).

It is well-known that the $\ptsat$ problem is NP-hard~\cite{planar-3SAT}. Also note that it is easy to see that the line-segment covering problem is in NP. Indeed, given a possible covering, we can construct the arrangement of line segments and verify in polynomial time that indeed all cells are covered. In the remainder of this section we provide a polynomial time reduction and prove its correctness.

\subsection{Reduction}
\begin{figure}[tb]
  \centering
  \includegraphics{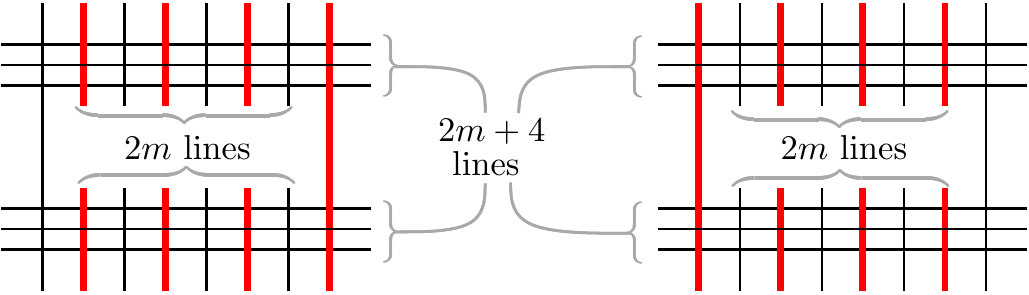}
  \caption{Gadget for a variable with a true (left) and false (right) assignment. Red (thick) edges show the two possible covers with $m+1$ segments.}
  \label{fig:variable}
\end{figure}
\begin{figure*}[tb]
  \centering
  \includegraphics{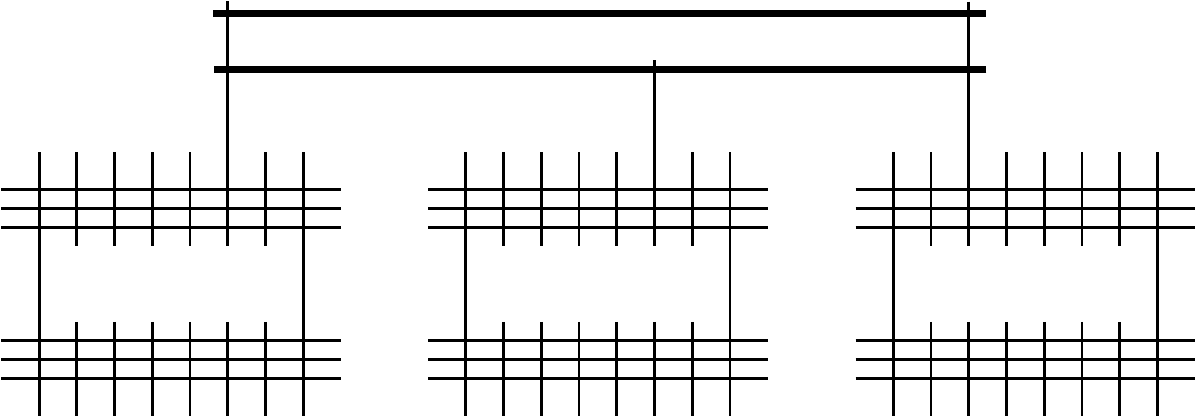}
  \caption{A clause gadget used in showing NP-hardness in Section~\ref{sec:npc-rectilinear}, parts are variable gadgets that connect to the edges of a clause gadget (marked with thicker line segments).}
  \label{fig:clause}
\end{figure*}

For each variable in $\Phi$ we create a gadget consisting of $4m + 8$ horizontal segments and $4m+2$ vertical segments. The leftmost and rightmost vertical line stab all horizontal lines of the gadget, whereas $2m$ of the other lines stab the top $2m+4$ horizontal lines and $2m$ stab the lower $2m+4$ horizontal lines as illustrated in Fig.~\ref{fig:variable}. As we describe later, some of the vertical line segments may be further extended (above or below) to connect to the clause gadgets, but the horizontal segments will not cross any segments of other gadgets.

Intuitively speaking, we will show that any covering of the variable gadgets must choose one every other vertical segment, including either the right or leftmost segment. The choice of using either the rightmost or leftmost vertical segment is equivalent to assigning the variable to be true or false. The clause gadget will create additional cells that will be covered {\em for free} (without selecting additional line segments) provided that at least one variable satisfies the clause.

Let $s^{(i)}_1, \ldots, s^{(i)}_{2m}$ be the vertical segments created in the gadget for variable $x_i$ (numbered from left to right). We would like to sort the clauses $c_1, \ldots, c_j$ in which $x_i$ occurs in the order in which they appear on the embedding of the $\ptsat$ instance. However, this is not well-defined (since it is not always clear when a segment goes before another), so we proceed as follows: let $c_1, \ldots, c_{j'}$ be the clauses that contain variable $x_i$ and are embedded above the line containing all variable nodes, sorted in clockwise order of their connections to $x_i$. Similarly, let $c_{j'+1}, \ldots, c_j$ be the clauses that are embedded below the line (this time in counter-clockwise order). We define the ordering of the clauses around $x_i$ as the concatenation of both orderings. Since we have $2m$ vertical line segments for each clause and $m$ clauses we ensure that any vertical segment of a variable gadget is extended only towards a single clause, and any clause is associated to exactly three segments.

We now detail the gadget associated to clause $c=\ell_i \wedge \ell_j \wedge \ell_k$ (for $i<j<k$), where $\ell_i$ is a literal of variable $x_i$ (similarly, $\ell_j$ and $\ell_k$ are literals of variables $x_j$ and $x_k$, respectively). First, we extend the three segments associated to clause $c$ (above or below depending on where $c$ is placed in the embedding). We extend the segments associated to variables $x_i$ and $x_k$ slightly further than the segment of $x_j$. We complete our transformation by adding two horizontal segments that create a rectangle with the three extended segments, see Fig.~\ref{fig:clause}.

This concludes the construction of a line-segment-covering instance $L$ from a $\ptsat$ input $\Phi$. Next we show that there is a satisfying assignment for $\Phi$ if and only if there is a subset $L'$ of $L$ of size at most $n(2m+1)$ that covers all cells in the arrangement.

\subsection{Correctness}
In the following theorem, we show that the line segment covering problem is NP-complete even when the input line segments are rectilinear.

\begin{lemma}\label{lem:npc-equiv}
A $\ptsat$ expression $\Phi$ is satisfiable if and only if there is a cover of size at most $n(2m+1)$ for its corresponding line-segment-covering instance $L$.
\end{lemma}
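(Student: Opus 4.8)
The plan is to prove the two directions of the equivalence separately, with the bulk of the work being a careful analysis of how any near-optimal cover must behave on a single variable gadget. The central structural claim I would establish first is the following: in any cover $L'$ of the arrangement, the segments of $L'$ restricted to a single variable gadget for $x_i$ must include at least $2m+1$ of that gadget's vertical segments, and moreover $2m+1$ segments suffice to cover all cells that lie strictly inside the gadget, with exactly two ways to do this using that few — the "alternating" pattern that uses $s^{(i)}_1, s^{(i)}_3, \ldots$ together with the leftmost vertical line, or the pattern using $s^{(i)}_2, s^{(i)}_4, \ldots$ together with the rightmost vertical line (these are the two configurations drawn in Fig.~\ref{fig:variable}). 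The lower bound here is the crux: I would argue that the horizontal segments, being so numerous and mutually crossing the verticals, partition the gadget into a grid-like family of unit cells, so that covering a cell forces one of its two bounding vertical segments into $L'$; then a counting/packing argument over a well-chosen set of pairwise "vertical-disjoint" cells forces $\ge 2m+1$ verticals, and the fact that the leftmost and rightmost lines stab \emph{all} horizontal segments (while the interior verticals stab only $2m+4$ of them) is what pins down that one of these two extreme lines must be chosen and that the choice propagates as an alternating pattern. Summing over the $n$ gadgets gives the global bound $n(2m+1)$, and equality forces the cover to be exactly a union of $n$ alternating patterns, one per variable — this is the interpretation of a cover of size exactly $n(2m+1)$ as a truth assignment (choosing the right line $=$ true, the left line $=$ false, say).

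For the forward direction, given a satisfying assignment I would build $L'$ by taking, in each variable gadget, the alternating pattern dictated by the truth value of $x_i$. By the structural claim this covers every cell interior to every variable gadget, using exactly $n(2m+1)$ segments total, so it only remains to check the clause cells. Each clause gadget $c = \ell_i \wedge \ell_j \wedge \ell_k$ adds a bounded region cut by the three extended verticals of $x_i, x_j, x_k$ (with the $x_i$ and $x_k$ segments extended slightly further than the $x_j$ segment) plus two new horizontal segments; I would enumerate the $O(1)$ cells this creates and verify that whenever \emph{at least one} of the three literals is satisfied, the corresponding extended vertical segment is in $L'$ and bounds all of these cells — the reason the construction extends $x_i$ and $x_k$ past $x_j$ is precisely to make every clause cell incident to the satisfying segment regardless of which of the three literals is the true one. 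Since the assignment satisfies every clause, all clause cells are covered for free, and $L'$ has size $n(2m+1)$ as required.

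For the reverse direction, suppose $L'$ is a cover with $|L'| \le n(2m+1)$. By the structural claim applied to each of the $n$ gadgets, $|L'| \ge n(2m+1)$, so $|L'| = n(2m+1)$ and within each gadget $L'$ is exactly one of the two alternating patterns; read off the truth assignment from these choices. It remains to see that this assignment satisfies $\Phi$. Fix a clause $c$; because $L'$ covers the clause cells of $c$'s gadget, and because (again by the $O(1)$ case analysis) those cells can only be covered by one of the three extended variable segments or by one of $c$'s own two horizontal segments — but the two horizontal segments, whichever single one is chosen, fail to bound at least one clause cell, whereas the extended vertical of a literal bounds all of them only when that literal is the one selected by $L'$'s alternating pattern — it follows that at least one of $\ell_i, \ell_j, \ell_k$ is satisfied. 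Hence every clause is satisfied and $\Phi$ is satisfiable. The main obstacle, as noted, is making the single-gadget structural claim fully rigorous: one must choose the independent set of cells witnessing the $2m+1$ lower bound carefully, handle the top-half/bottom-half interleaving of interior verticals, and confirm that extending verticals upward/downward into clauses creates no new interior cells that would spoil the count — I would isolate this as a separate lemma preceding the proof of Lemma~\ref{lem:npc-equiv}.
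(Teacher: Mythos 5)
Your proposal follows essentially the same route as the paper: the same structural claim that each variable gadget admits exactly two minimum covers of size $2m+1$ (a claim the paper also asserts rather than proves in full detail), the same forward construction from a satisfying assignment, and the same tight-budget argument in the reverse direction forcing the cover to be exactly one alternating pattern per gadget, so that each clause cell can only be covered by one of its three extended vertical segments.

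One step of your forward direction is wrong as stated, though. A clause gadget creates not only the single clause cell but also several cells lying between the clause rectangle and the variable gadgets (the paper calls these \emph{other cells}), and it is not true that the extended vertical of the satisfied literal bounds all of them: a cell sitting to the left of $x_i$'s extension, say, is not touched by $x_k$'s segment, so if $\ell_k$ is the only true literal your verification target fails for that cell. These cells are covered for a different reason, independent of which literal is true: the left and right walls of such a cell are extensions of variable-gadget verticals, so walking along its boundary one reaches the neighbouring vertical of that gadget, and every alternating pattern selects one out of any two consecutive verticals. Only the clause cell itself needs the satisfied literal, because it alone is bounded by all three extended verticals (which is the purpose of extending $x_i$'s and $x_k$'s segments past $x_j$'s). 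A related slip appears in your reverse direction: the two horizontal segments of a clause gadget do bound the single clause cell of this construction, so choosing one of them would cover it; what rules this out is not that they "fail to bound at least one clause cell" but the budget --- all $n(2m+1)$ segments are consumed by the variable gadgets, so no clause-gadget segment can belong to $L'$. Your structural claim already gives you that, so the conclusion survives, but the justification should be the counting argument rather than the bounding one.
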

\begin{proof}
First we prove that given a satisfying assignment for $\Phi$ we can cover $L$ with $n(2m+1)$ segments. If a variable is true, then we select the rightmost vertical segment, and for each set of $2m$ segments that only intersect the top or bottom we select the odd ones counting from the leftmost segment starting at one, see also Fig.~\ref{fig:variable}. If the variable is false we select the leftmost longer segment and the even ones from the sets of shorter segments.

Next we show that all cells are covered. We consider three types of cells: cells in the interior of the grids created of the variable gadgets are called {\em variable cells}, the single rectangular cell associated to a clause gadget is called {\em clause cell}; any other cell (included the unbounded one) that is created with our construction is simply called an {\em other cell}.

Since we have selected one every other segment, clearly all variable cells are covered. The fact that the variable assignment satisfies all clauses implies that at least one of the three vertical segments defining a clause cell has been selected, so the clause cell is covered. Hence, all clause cells are also covered. Finally, for the remaining cells it suffices to see that each such cell always has two consecutive vertical segments of a variable gadget in its boundary. Indeed, Such cells are only created when connecting clauses and variables and in particular, their left and right boundaries are created by those extensions. Thus, when walking along the boundary of any such cell, we will find the next vertical segment of the variable gadget (or the predecessor in case the segment was the last one). One of the segments must have been selected, so also these cells are covered.

The reverse statement is similar. Assume that we have a cover $L'$ for $L$ of size $n(2m+1)$. First observe that each variable gadget needs at least $2m+1$ selected line segments to cover its interior cells. To achieve this we must select either the left or rightmost segment, after which there is a unique cover for the remaining cells that uses only $2m$ segments. Covering the cells within the variable gadgets with fewer than $2m+1$ segments is not possible, so any cover consist of exactly $2m+1$ segments per variable gadget. Furthermore, none of these segments can be reused between different variable gadgets. Thus, we conclude that each clause gadget must be covered by the lines selected from the variable gadgets. We create a variable assignment for each variable as before, depending if the leftmost or rightmost segments has been selected.

Since $L'$ covers all cells, it must also cover the clause cells, which implies that at least one of the three vertical segments has been selected. Equivalently, this implies that in each clause the choice of assignment of the variables makes at least one of its literals true and the formula $\Phi$ is satisfiable.
\end{proof}

Since the reduction is easily computed in polynomial time we conclude the following result.

\begin{theorem}
Given a set $L$ of axis-aligned line segments, it is NP-hard to find a minimum-size set $L' \subseteq L$ so that for each cell of the arrangement at least one of its defining segments is in $L'$.
\end{theorem}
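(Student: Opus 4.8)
The plan is to derive the theorem directly from Lemma~\ref{lem:npc-equiv} together with the NP-hardness of $\ptsat$. First I would recall that deciding satisfiability of a $\ptsat$ formula is NP-hard~\cite{planar-3SAT}, and that a planar embedding of the required form (all variable-nodes on a horizontal line and all clause-nodes above or below it) can be computed in polynomial time, so we may treat it as part of the input. Given such a formula $\Phi$ with $n$ variables and $m$ clauses, I would apply the reduction described above to produce the set $L$ of axis-aligned line segments.

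The second step is to verify that this construction runs in polynomial time and yields a polynomially sized instance. Each variable gadget contributes $4m+8$ horizontal and $4m+2$ vertical segments, and each clause gadget adds two further horizontal segments together with the bounded extensions of three already existing vertical segments; hence $|L| = O(nm)$. The coordinates can be placed on an integer grid of polynomial size: lay out the $n$ variable gadgets from left to right, reserve $2m$ vertical tracks per variable (one per clause occurrence), and route each extension into the slab above or below the variable line as dictated by the embedding, using the clause ordering around each variable defined in the construction. Then every segment endpoint is described with $O(\log(nm))$ bits and the whole instance is produced in time polynomial in $n+m$.

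The third step is the reduction's correctness, which is exactly the content of Lemma~\ref{lem:npc-equiv}: $\Phi$ is satisfiable if and only if $L$ admits a cover of size at most $n(2m+1)$. Consequently, an algorithm returning a minimum-size covering set $L'$ would decide $\ptsat$ by comparing $|L'|$ with $n(2m+1)$, all in polynomial time. Therefore computing a minimum-size cover is NP-hard; combined with the earlier observation that the decision version lies in NP, the problem is in fact NP-complete.

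I do not expect a serious obstacle at this stage, since the heavy lifting is already carried out in Lemma~\ref{lem:npc-equiv}. The only point demanding care is the bookkeeping in the second step: confirming that the gadgets can be realized on a polynomially bounded integer grid so that the horizontal segments of distinct gadgets never cross one another and so that each vertical extension reaches exactly one clause gadget, consistent with the clockwise/counter-clockwise clause ordering used in the reduction.
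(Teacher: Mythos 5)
Your proposal matches the paper's argument: the theorem follows from Lemma~\ref{lem:npc-equiv} together with the observation that the reduction is computable in polynomial time, which is exactly what the paper states (in one sentence). Your extra bookkeeping about realizing the gadgets on a polynomially bounded integer grid is a correct and welcome elaboration of that one-line claim, not a different approach.
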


\section{Covering only rectangular cells}
\label{sec:fpt-algo}
From the above reduction we can see that the main difficulty of the problem lies in covering the rectangular cells. Thus, in this section we turn our attention to a variant of the problem in which segments are axis-aligned and we are not required to cover all cells, but only those that are rectangles. That is, cells whose boundary is formed by exactly four line segments. First we briefly argue that this variant is also NP-hard by adapting the NP-hardness proof in the previous section. Then we show that the problem is fixed parameter tractable (FPT) with respect to $k$, the number of segments in the optimal solution.

\subsection{NP-hardness}
The hardness almost follows from the construction of Section~\ref{sec:npc-rectilinear}. Indeed, clause cells are the only critical part of the reduction that need to be modified. Instead, we create the clause gadget with 6 segments as shown in Figure~\ref{fig:clause-gadget-updated}. This modified gadget contains three rectangular cells. Note that incoming segments from variables can cover at most two of these cells, but at least one segment must be added so as to cover the intermediate rectangular cell. This additional edge can cover two cells, either the left and middle cells, or the middle and right cells. Thus, it follows that we can find a covering of all rectangular cells of this modified instance with $n(2m+1)+m$ segments if and only if the associated $\ptsat$ instance is satisfiable, and thus this variation is also NP-hard.

\begin{figure}
\centering
\includegraphics{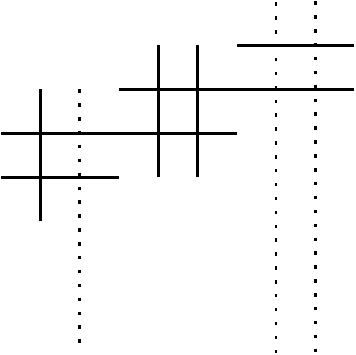}
\caption{Modified clause gadget. The three dashed vertical segments connect to the corresponding variable gadgets. This new gadget creates three rectangular cells that can be guarded with one additional segment if and only if the variable assignment satisfies the clause.}
\label{fig:clause-gadget-updated}
\end{figure}

\begin{theorem}
Given a set $L$ of axis-aligned line segments, it is NP-hard to find a minimum-size set $L' \subseteq L$ so that for each rectangular cell of the arrangement at least one of its defining segments is in $L'$.
\end{theorem}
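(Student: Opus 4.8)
The plan is to adapt the NP-hardness reduction of Section~\ref{sec:npc-rectilinear} so that the only cells we are forced to cover are rectangular. The entire variable-gadget part of the reduction is reused verbatim: its interior cells are all rectangular grid cells, and the argument from Lemma~\ref{lem:npc-equiv} shows that each variable gadget needs exactly $2m+1$ segments, that the leftmost-or-rightmost choice is forced, and that none of these segments can be shared between gadgets. The only thing that changes is the clause gadget, which in the original reduction created a single clause cell that was in general not rectangular (its boundary involved the staggered extensions of the three literal segments). First I would replace that clause gadget by the six-segment gadget depicted in Figure~\ref{fig:clause-gadget-updated}, in which the three literal segments coming up from the variable gadgets, together with three new horizontal segments, bound a horizontal strip subdivided into exactly three rectangular cells (left, middle, right), with the left and right cells each touching one outer literal segment and the middle cell touching the central literal segment.

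The heart of the correctness argument is the following local claim about this gadget: among the three rectangular cells it creates, the incoming variable segments can cover at most two (the left cell only if its literal segment was selected in the variable gadget, the right cell only if its literal segment was selected), and the middle cell is bounded only by the central literal segment and the three new horizontal segments, none of which lies on the boundary of a variable cell; hence to cover all three rectangular cells of the gadget at least one \emph{additional} segment from inside the gadget must be selected, and one well-placed additional horizontal segment suffices to cover two of the three cells, so exactly one extra segment per clause is both necessary and sufficient \emph{provided} at least one of the three literal segments is already selected, and exactly one extra segment is impossible to make work if none of the three literal segments is selected (then the extra segment covers at most two of the three cells). This is a purely combinatorial case check on a fixed small gadget.

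Granting this local claim, the global equivalence follows by mirroring the proof of Lemma~\ref{lem:npc-equiv}. For the forward direction, from a satisfying assignment select the $2m+1$ segments per variable gadget exactly as before (so each literal that is true has its segment selected), and for each clause select one extra horizontal segment covering the two rectangular cells on the side of a satisfied literal; this gives $n(2m+1)+m$ segments covering all rectangular cells. For the reverse direction, a cover of size $n(2m+1)+m$ must spend at least $2m+1$ on each variable gadget (forcing the true/false choice) and, by the local claim, at least one extra segment on each clause gadget, so the budget is tight and in particular every clause has at least one of its literal segments selected, i.e.\ the induced assignment satisfies $\Phi$. Finally, the reduction is clearly polynomial-time, so NP-hardness follows; membership in NP is immediate as in Section~\ref{sec:npc-rectilinear} (build the arrangement and check all rectangular cells in polynomial time).

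The main obstacle I anticipate is making the six-segment clause gadget geometrically consistent with the rest of the construction: one must verify that attaching it does not create \emph{new} rectangular cells beyond the intended three (in particular along the staggered extensions where the literal segments leave their variable gadgets), and that the ordering of clauses around each variable still lets all $m$ clause gadgets be placed without their horizontal segments crossing other gadgets. Handling this is routine — it reuses the clockwise/counter-clockwise ordering of clauses around each variable from the original reduction and the fact that the horizontal clause segments are short and local — but it is the step that requires care rather than the combinatorial counting, which is the easy part.
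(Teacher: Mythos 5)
Your overall strategy is exactly the paper's: keep the variable gadgets and the $n(2m+1)$ accounting from Lemma~\ref{lem:npc-equiv}, replace each clause gadget by the six-segment gadget of Figure~\ref{fig:clause-gadget-updated} producing three rectangular cells, and argue that exactly one extra segment per clause is needed and works if and only if the clause is satisfied, giving the bound $n(2m+1)+m$. So the route is the same; the issue is with the specific gadget you describe.

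Your local claim is internally inconsistent in a way that matters. You place the central literal segment on the boundary of the \emph{middle} cell, yet you also claim that at least one additional gadget-internal segment is \emph{always} required. With your geometry, a clause whose three literals are all true has all three of its cells covered by incoming segments and needs no extra segment at all. That kills the tightness argument in the reverse direction: the freed-up budget could be spent as two internal segments on some other clause gadget, covering all three of its cells without any of its literals being selected, so a cover of size $n(2m+1)+m$ no longer forces a satisfying assignment. (A secondary problem: with three cells in a row bounded by only three new horizontal segments, either the top or the bottom of the strip is a single segment adjacent to all three cells, which by itself would cover the whole gadget.) The paper's gadget is arranged so that the middle rectangular cell is bounded \emph{only} by new segments of the gadget --- no incoming literal segment touches it --- so one extra segment per clause is unconditionally necessary, the budget of $m$ extras is tight, and since any single segment covers at most two of the three cells (either left and middle, or middle and right), the remaining cell must be covered by a selected literal segment, which is what makes the clause satisfied. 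You then also need the central literal segment to be adjacent to at least one of the outer cells so that a clause satisfied only by its middle literal can still be finished with one extra segment. These are exactly the properties to verify on the figure; your ``routine'' final paragraph is where the actual content of the proof lives.
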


\subsection{FPT on the size of the optimal solution}
Next we show that the problem is fixed parameter tractable (FPT) with respect to $k$, the size of the optimal solution. Our aim is to compute a kernel of small size (or conclude that there is no solution of size at most~$k$).

Since we want to cover only rectangular cells we can represent each cell by an {\em associated subset} (or subset for short) $C = \{ \ell_1, \ell_2, \ell_3, \ell_4\}$ with the four bounding line segments as its elements. This reduces the line segment covering problem to a hitting set problem for a collection $\calC$ of subsets of size four. Our approach is to reduce the number of subsets to consider; first to a set $\calC_1$ where for any two line segments there are at most $2k$ subsets that contain both these line segments; then to a set $\calC_2$ where for any single line segment there are at most $2k^2$ subsets containing it.
First we prove the following lemma.

\begin{lemma}\label{lem:common3}
Let $\ell, \ell'$ and $\ell''$ be any three line segments in $L$.
There are at most two subsets in $\calC$ containing all three line segments,
$\ell, \ell'$ and $\ell''$.
\end{lemma}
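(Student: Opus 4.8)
The plan is to exploit the elementary structure of an axis-aligned rectangle: it has exactly four sides, namely a bottom and a top side (both horizontal) and a left and a right side (both vertical). Every subset in $\calC$ is by definition the set of four segments bounding some rectangular cell of the arrangement, so it suffices to bound the number of rectangular cells whose boundary contains all of $\ell,\ell',\ell''$. Each such segment must play the role of one of the four sides. First I would dispose of the trivial case in which $\ell,\ell',\ell''$ all have the same orientation: a rectangle has only two sides of each orientation, so no rectangular cell can have all three on its boundary, and the bound holds with zero subsets. Hence we may assume, without loss of generality, that $\ell$ and $\ell'$ are horizontal and $\ell''$ is vertical; the remaining case (two vertical, one horizontal) is completely symmetric.

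Next I would argue that the roles of $\ell$ and $\ell'$ are forced. If $\ell$ and $\ell'$ happen to lie at the same $y$-coordinate, then no rectangular cell can have both on its boundary (it has a single bottom side and a single top side, at different heights), and again we are done. So assume they lie at distinct heights $y_\ell<y_{\ell'}$. Then in any rectangular cell that is bounded by both of them, $\ell$ must be the bottom side and $\ell'$ the top side, so the vertical extent of any such cell is exactly $[y_\ell,y_{\ell'}]$. The vertical segment $\ell''$ can serve either as the left side or as the right side of the cell, and these are the only two options. I claim that each of the two options is realised by at most one cell of the arrangement; this immediately yields the bound $1+1=2$.

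To prove the claim, fix one of the options, say that $\ell''$, at $x$-coordinate $x_0$, is the left side. Any rectangular cell of this type has the form $R=[x_0,a]\times[y_\ell,y_{\ell'}]$ for some $a>x_0$, with a fourth bounding segment $\sigma$ containing the right side $\{a\}\times[y_\ell,y_{\ell'}]$. Suppose two distinct such cells $R_1=[x_0,a_1]\times[y_\ell,y_{\ell'}]$ and $R_2=[x_0,a_2]\times[y_\ell,y_{\ell'}]$ existed, say with $a_1<a_2$. The right-side segment $\sigma_1$ of $R_1$ contains $\{a_1\}\times[y_\ell,y_{\ell'}]$, so it lies at $x$-coordinate $a_1\in(x_0,a_2)$ and spans the full interval $[y_\ell,y_{\ell'}]$; hence $\sigma_1$ meets the interior of $R_2$, contradicting the fact that a cell of the arrangement is disjoint from every segment of $L$. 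Therefore $a_1=a_2$ and the two cells coincide. Swapping left/right (and, for the other main case, horizontal/vertical) handles the remaining situations identically.

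I expect the only mildly delicate points to be this last uniqueness argument and the small degeneracy checks — that "top/bottom" and "left/right" are genuinely forced once the relevant segments are fixed, and that a bounding segment of one candidate cell really crosses the \emph{interior} (not merely the boundary) of the other. Everything else is a direct appeal to the definition of an axis-aligned rectangular cell and of a cell of the arrangement.
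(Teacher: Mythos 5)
Your proof is correct and follows essentially the same route as the paper's: two of the three segments must be parallel, any rectangular cell containing both must span the strip between them, and the orthogonal third segment can then bound at most one such cell on each of its two sides. You simply spell out the degenerate cases and the uniqueness argument that the paper leaves implicit.
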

\begin{proof}
Since the arrangement is rectilinear, two lines, say $\ell$ and $\ell'$ are parallel and the other is orthogonal to these. This means that any cell having all three line segments $\ell, \ell'$ and $\ell''$ on its boundary must span the strip between $\ell$ and $\ell'$. However at most two such cells can also be adjacent to the third line segments $\ell''$.
\end{proof}

We start reducing our problem instance by looking at pairs of line segments. Specifically we count for every pair of line segments how many subsets contain both. Then for any pair $\ell, \ell'$ shared in more than $2k$ subsets we add the subset $\{\ell, \ell'\}$ and remove all subsets containing both $\ell$ and $\ell'$. Let $\calC_1$ denote this reduced subset.

\begin{lemma}\label{lem:common2}
A set $L' \subset L$ of line segments, with $|L'| \leq k$ is a minimum-size cover of $\calC_1$ if and only if it is a minimum-size cover of $\calC$.
\end{lemma}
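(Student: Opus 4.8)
The plan is to prove both directions by showing that covering $\calC_1$ is essentially forced to behave like covering $\calC$, once we argue that the auxiliary pairs $\{\ell,\ell'\}$ we added are ``cheap'' to hit. The key structural observation is the following: if a pair $\ell,\ell'$ appears in more than $2k$ subsets of $\calC$, then any cover $L'$ of $\calC$ with $|L'|\le k$ must contain either $\ell$ or $\ell'$. Indeed, by Lemma~\ref{lem:common3} any third segment $\ell''$ lies in at most two of those subsets, so a cover avoiding both $\ell$ and $\ell'$ would need to pick at least $\lceil (2k+1)/2\rceil = k+1$ distinct segments just to hit these subsets, contradicting $|L'|\le k$. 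This is the main step, and I expect it to be the crux of the whole argument; everything else is bookkeeping.

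With that observation in hand I would argue as follows. First note $\calC_1$ is obtained from $\calC$ by, for each ``heavy'' pair $\{\ell,\ell'\}$, deleting all size-four subsets containing both $\ell$ and $\ell'$ and inserting the size-two subset $\{\ell,\ell'\}$. For the forward direction, suppose $L'$ with $|L'|\le k$ is a minimum-size cover of $\calC_1$. Every two-element subset $\{\ell,\ell'\}\in\calC_1\setminus\calC$ is hit by $L'$, so $L'$ contains $\ell$ or $\ell'$; hence $L'$ also hits every size-four subset of $\calC$ that contained that pair. All other subsets of $\calC$ survive into $\calC_1$ and are therefore hit as well. So $L'$ covers $\calC$, and since any cover of $\calC$ of size $\le k$ also covers $\calC_1$ (a cover of $\calC$ hits each surviving four-subset directly, and by the key observation it contains an endpoint of each heavy pair, hence hits the corresponding two-subset), we get that a minimum cover of $\calC$ has size equal to a minimum cover of $\calC_1$, and $L'$ attains it; thus $L'$ is a minimum cover of $\calC$. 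For the reverse direction, if $L'$ with $|L'|\le k$ is a minimum cover of $\calC$, then by the key observation it hits every heavy pair's two-subset, and it hits all surviving four-subsets, so it covers $\calC_1$; and any cover of $\calC_1$ of size $\le k$ covers $\calC$ by the argument just given, so the minimum sizes coincide and $L'$ is a minimum cover of $\calC_1$.

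One subtlety to handle carefully is that the reduction is applied to \emph{all} heavy pairs, possibly iteratively or simultaneously, so I should make sure the key observation is stated for $\calC$ (the original collection) and note that removing subsets only shrinks the collection, so a pair that is heavy with respect to $\calC$ forces an endpoint into \emph{any} size-$\le k$ cover of any intermediate collection too; alternatively, phrase the whole lemma as a single-shot statement about $\calC\to\calC_1$ where $\calC_1$ is the result of processing every heavy pair of $\calC$ at once, which avoids ordering issues entirely. Either way, the only place real work happens is the counting argument via Lemma~\ref{lem:common3}; the rest is a routine ``the minima agree and the witness transfers'' argument. I would also remark that the statement implicitly needs that a minimum cover of $\calC_1$ has size $\le k$ iff a minimum cover of $\calC$ does --- which again follows from the two inclusions between covers established above.
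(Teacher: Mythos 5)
Your proposal is correct and follows essentially the same route as the paper: the crux in both is the counting argument via Lemma~\ref{lem:common3} showing that any cover of size at most $k$ must contain one segment of each heavy pair, and the rest is the transfer of covers and minima between $\calC$ and $\calC_1$. The only difference is presentational (you phrase it as two inclusions between the classes of size-$\le k$ covers, while the paper argues by contradiction with a hypothetical smaller cover), and your remark about handling all heavy pairs simultaneously matches what the paper implicitly does.
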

\begin{proof}
Clearly the claim holds if $\calC=\calC_1$. Thus, from now on we assume that $\calC^+ = \calC_1 \backslash \calC$ and $\calC^- = \calC \backslash \calC_1$ are two nonempty sets that contain all elements that were added and removed from $\calC$, respectively.

Now assume that $L'$ with $|L'| \leq k$ is a minimum size cover for $\calC_1$. Observe that all subsets of $\calC^-$ must also be covered, since for every subset $C \in \calC^-$ there is a subset $\{\ell, \ell'\}$ such that both $\ell$ and $\ell'$ occur in $C$ (and $\{\ell, \ell'\}\in \calC^+$). It follows that $L'$ is a also a cover for $\calC$. To show that $L'$ is of minimum size assume for a contradiction that a smaller set $L''$ is also a cover for $\calC$.
Clearly, $L''$ covers $\calC \cap \calC_1$. Now take any set $\{\ell, \ell'\}$ in $\calC^+$, if neither $\ell$ nor $\ell'$ is part of $L''$, then we claim that $|L''| > k$. Indeed, we introduced $\{\ell, \ell'\}$ into $\calC_1$ only when more than $2k$ subsets in $\calC$ contain both $\ell$ and $\ell'$. If neither $\ell$ nor $\ell'$ are part of $L''$, then Lemma~\ref{lem:common3} implies that every other line can cover at most two of these subsets. In particular, the cardinality of $L''$ will be larger than $k$, contradicting with the fact that $L''$ is smaller than $L'$. A similar argumentation shows that any minimum-size cover of $\calC$ is also a minimum-size cover of $\calC_1$, which concludes the proof.
\end{proof}


Now we further reduce our problem instance to a set~$\calC_2$ as follows. We count for each line how many subsets of~$\calC_1$ contain it. Then for each line~$\ell$ that has more than~$2k^2$ subsets containing it we replace all these subsets by subset~$\{\ell\}$.

\begin{lemma}\label{lem:common1}
A set $L'$, with $|L'| \leq k$ is a minimum-size cover for $\calC_1$ if and only if it is a minimum-size cover of~$\calC_2$.
\end{lemma}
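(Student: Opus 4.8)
The plan is to mirror the proof of Lemma~\ref{lem:common2} almost verbatim, with the role of pairs of segments now played by single segments, and the role of Lemma~\ref{lem:common3} played by the defining property of $\calC_1$: after the first reduction, every pair of line segments is contained in at most $2k$ subsets of $\calC_1$. As before, if $\calC_2 = \calC_1$ there is nothing to prove, so I would assume that $\calC^+ = \calC_2 \setminus \calC_1$ --- the set of singletons $\{\ell\}$ that were introduced --- and $\calC^- = \calC_1 \setminus \calC_2$ --- the set of subsets that were deleted, each of which contains some $\ell$ with $\{\ell\} \in \calC^+$ --- are both nonempty.

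The one quantitative ingredient I would isolate first is the following claim: if $\{\ell\} \in \calC^+$ and $L''$ is any cover of $\calC_1$ with $|L''| \le k$, then $\ell \in L''$. Indeed, by construction strictly more than $2k^2$ subsets of $\calC_1$ contain $\ell$; if $\ell \notin L''$, each such subset is hit by $L''$ through one of its (at most three) remaining segments, and every fixed segment $\ell' \ne \ell$ occurs together with $\ell$ in at most $2k$ subsets of $\calC_1$, so $2k\,|L''| > 2k^2$, i.e. $|L''| > k$, a contradiction. (A trivial symmetric remark I will also use: covering the singleton $\{\ell\} \in \calC^+ \subseteq \calC_2$ forces $\ell$ into any cover of $\calC_2$ outright.)

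With this in hand the two directions are routine bookkeeping, identical in spirit to Lemma~\ref{lem:common2}. For the forward implication, a minimum-size cover $L'$ of $\calC_1$ with $|L'| \le k$ must contain every $\ell$ with $\{\ell\} \in \calC^+$ by the claim, hence covers $\calC^+$; it also covers $\calC_1 \cap \calC_2$, so it covers $\calC_2$. If some $L''$ with $|L''| < |L'|$ covered $\calC_2$, then $L''$ would contain each such $\ell$ (it covers the singletons of $\calC^+$), hence cover every subset of $\calC^-$ and therefore all of $\calC_1$, contradicting the minimality of $L'$. The reverse implication is the mirror image: a minimum-size cover $L'$ of $\calC_2$ with $|L'| \le k$ contains all the relevant $\ell$'s directly, hence covers $\calC^-$ and thus all of $\calC_1$; and a hypothetical smaller cover of $\calC_1$ would --- again by the claim, since it has size at most $k$ --- contain those same $\ell$'s, therefore cover $\calC^+$ and all of $\calC_1 \cap \calC_2$, hence cover $\calC_2$, contradicting the minimality of $L'$.

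I expect the only real subtlety, hence the main obstacle, to be the quantitative step: one must be careful that ``subsets of $\calC_1$ containing $\ell$'' correctly includes both the surviving size-four cells and the size-two subsets introduced by the first reduction, and that the bound of $2k$ on co-occurrences of a pair genuinely holds in $\calC_1$ for \emph{every} pair --- both for pairs that were heavy in $\calC$ (and so now appear only in their own size-two subset) and for pairs that were light (whose co-occurrence count only decreased). Once that bookkeeping is pinned down, the $2k^2$-versus-$2k$ arithmetic yields exactly the needed inequality $|L''| > k$, and the remainder is a transcription of the earlier lemma.
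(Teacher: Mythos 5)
Your proposal is correct and follows essentially the same route as the paper's proof: the decisive step in both is that a singleton $\{\ell\}$ enters $\calC_2$ only when more than $2k^2$ subsets of $\calC_1$ contain $\ell$, while by construction of $\calC_1$ no other segment lies in more than $2k$ of those subsets, forcing any size-$\le k$ cover to contain $\ell$ itself. Your explicit isolation of this as a standalone claim (and the check that the $2k$ bound persists for all pairs in $\calC_1$) is a slightly more careful write-up of exactly the argument the paper gives.
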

\begin{proof}
As before, it suffices to consider the case in which $\calC_1\neq \calC_2$. Let $\calC_1^+ = \calC_2 \backslash \calC_1$ and $\calC_1^- = \calC_1 \backslash \calC_2$ denote the sets that were added and removed from $\calC_1$ to create $\calC_2$.
Since all sets in $\calC_1^-$ contain a line of one of the singleton sets of $\calC_1^+$, a set $L'$, with $|L'| \leq k$ that is a minimum size cover of $\calC_2$ is also a cover of $\calC_1$. To show that $L'$ is also a minimum-size cover for $\calC_1$, assume for a contradiction that there is a smaller cover $L''$ for $\calC_1$. The lines of $L''$ also cover $\calC_2 \backslash \calC_1^+$. Therefore, if $L''$ is not a cover of $\calC_2$, then there must be a subset $\{\ell\} \in \calC_1^+$ that is not covered by $L''$. Recall that $\{\ell\}$ was added to $\calC_2$ because there were more than $2k^2$ sets in $\calC_1$ that contain $\ell$. By construction of $\calC_1$, no line $\ell' \neq \ell$ can cover more than $2k$ of these sets (otherwise, the pair $\{\ell,\ell'\}$ would have been added to $\calC_1$). Thus, we conclude that $L''$ has more than $k$ segments, a contradiction.

In a similar way we can prove that a subset $L'$, with $|L'| \leq k$, that is a minimum-size cover for $\calC_1$ is also a minimum size cover for $\calC_2$, thus, concluding the proof.
\end{proof}


\begin{lemma}
If $|\calC_2| > 2k^3$, then there is no cover of size at most $k$ for $\calC$.
\end{lemma}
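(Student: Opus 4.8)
The plan is to argue by contraposition: I will show that if $\calC_2$ admits a cover of size at most $k$, then $|\calC_2| \le 2k^3$, and then transfer the conclusion back to $\calC$ through Lemmas~\ref{lem:common2} and~\ref{lem:common1}.

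The key ingredient is a counting fact about $\calC_2$: \emph{every single line segment $\ell \in L$ is contained in at most $2k^2$ subsets of $\calC_2$}. This follows directly from the construction of $\calC_2$ from $\calC_1$. If $\ell$ was contained in more than $2k^2$ subsets of $\calC_1$, then all of those subsets were replaced by the single subset $\{\ell\}$, so $\ell$ lies in only that one subset of $\calC_2$ (note that the singleton replacements introduced for other line segments $\ell' \neq \ell$ do not contain $\ell$, so they cannot increase this count). Otherwise $\ell$ was in at most $2k^2$ subsets of $\calC_1$, and passing to $\calC_2$ only removes subsets, so the count does not grow. In either case the bound $2k^2$ holds (for $k \ge 1$).

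Now suppose, for contradiction, that $|\calC_2| > 2k^3$ and yet $\calC$ admits a cover $L^\star \subseteq L$ with $|L^\star| \le k$. Let $L^\star$ be chosen to be a minimum-size cover of $\calC$; its size is still at most $k$. By Lemma~\ref{lem:common2}, $L^\star$ is a minimum-size cover of $\calC_1$, and then by Lemma~\ref{lem:common1} it is a minimum-size cover of $\calC_2$; in particular $\calC_2$ is covered by a set of at most $k$ line segments. Every subset of $\calC_2$ must contain one of these at most $k$ line segments, and by the counting fact each such line segment is contained in at most $2k^2$ subsets of $\calC_2$. Hence $|\calC_2| \le k \cdot 2k^2 = 2k^3$, contradicting $|\calC_2| > 2k^3$. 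Therefore no cover of size at most $k$ exists for $\calC$.

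I do not expect a serious obstacle here; the only points requiring care are the bookkeeping in the counting fact (ensuring that the singleton subsets introduced when forming $\calC_2$ are accounted for, and that the replacement step for one line segment does not inflate the subset-count of another), and the correct chaining of the two reduction lemmas, which are stated in terms of minimum-size covers rather than arbitrary covers of size at most $k$.
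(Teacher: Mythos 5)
Your proposal is correct and follows exactly the route the paper intends: the paper's own proof is a one-line appeal to Lemmas~\ref{lem:common2} and~\ref{lem:common1} together with the fact that each segment lies in at most $2k^2$ subsets of $\calC_2$, which is precisely the counting fact and lemma-chaining you spell out. Your expansion (in particular, verifying that the singleton replacements do not inflate another segment's count, and routing the argument through a \emph{minimum-size} cover so the lemmas apply as stated) just makes explicit the details the paper leaves implicit.
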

\begin{proof}
Proof of this claim follows from Lemmas~\ref{lem:common2}~and~\ref{lem:common1} and the fact that each segment can only cover at most $2k^2$ sets of $\calC_2$.
\end{proof}

Now we look at the computational aspect of generating $\calC_2$. Both reduction steps from $\calC$ to $\calC_1$ and from $\calC_1$ to $\calC_2$ require counting subsets. Here we use the fact that the subsets are of size at most $4$ to show that this can be done in linear time using hash tables. Note that linear time here is in the size of $\calC$, the size of the arrangement, which may be quadratic with respect to the number of line segments.

\begin{lemma}
The set $\calC_2$ can be constructed in time $O(n\log n + C)$, where $n$ is the number of segments in $L$ and $C$ is the number of cells in the arrangement induced by $L$.
\end{lemma}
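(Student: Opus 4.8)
The plan is to split the construction into a \emph{geometric phase}, which extracts the initial collection $\calC$ from the arrangement of $L$, and a \emph{combinatorial phase}, which performs the two kernelization steps; the running time will follow from an output-sensitive arrangement algorithm together with the observation that every counting step can be implemented by integer sorting (or hashing) rather than by comparison sorting.

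In the geometric phase I would build the arrangement of $L$ as a doubly-connected edge list in which every directed edge records the input segment it lies on. For axis-parallel segments this can be done with a plane-sweep in $O(n\log n+A)$ time, where $A$ is the number of intersection points of $L$. An arrangement of $n$ segments has combinatorial complexity $\Theta(n+A)$, and by Euler's formula its number of faces satisfies $C\geq A-n+1$, hence $A=O(n+C)$ and this step runs in $O(n\log n+C)$ time. I then walk the boundary cycle of each bounded face once: a face is a rectangular cell exactly when its cycle makes four right-angle turns (pass-through vertices coming from $T$-junctions are simply skipped) and its four maximal sides lie on four distinct input segments, in which case I record the associated subset as the sorted $4$-tuple of those segment indices. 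Since the total boundary length over all faces is $O(n+C)$ and ordering four elements costs $O(1)$, this produces $\calC$ (with $|\calC|\le C$) in $O(n+C)$ time.

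In the combinatorial phase I use that every subset has at most four elements and that every segment index lies in $\{1,\dots,n\}$. To obtain $\calC_1$: for each subset $S\in\calC$ I emit the six unordered index pairs of its elements, each tagged with the index of $S$; there are $O(C)$ such items, each a pair of numbers in $\{1,\dots,n\}$, so a two-pass radix sort orders them in $O(n+C)$ worst-case time (a hash table keyed by pairs would instead give $O(C)$ expected time). A maximal run of equal pairs of length more than $2k$ marks a \emph{heavy} pair, and I flag every subset index occurring in that run as deleted; then $\calC_1$ consists of the surviving subsets of $\calC$ together with one size-two subset per heavy pair. The number of heavy pairs is $O(|\calC|/k)$, so $|\calC_1|=O(C)$ and the step takes $O(n+C)$ time. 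To obtain $\calC_2$ I repeat the same pattern one level down: for each subset of $\calC_1$ I emit one tagged item per element, count multiplicities with a single array of $n$ counters indexed by segment via a bucket sort, let a run of length more than $2k^2$ mark a heavy segment $\ell$ and flag the subsets of its run as deleted, and finally output the surviving subsets of $\calC_1$ plus a singleton $\{\ell\}$ for each heavy segment; this is again $O(n+C)$.

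Summing the two phases gives the claimed $O(n\log n+C)$ bound. The one genuinely delicate point is keeping each counting-and-filtering operation at $O(n+C)$ rather than $O(C\log C)$: this is precisely where one must exploit that the subsets have bounded size and that segment indices form a small integer universe, so that radix/bucket sorting (worst case) or hashing (expected) can replace comparison-based sorting, and where one must avoid rebuilding incidence lists by instead tagging emitted items with their originating subset. Everything else is routine traversal of the DCEL and of the collections $\calC$, $\calC_1$ and $\calC_2$.
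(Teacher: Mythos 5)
Your proposal is correct and follows essentially the same route as the paper: exploit the bounded subset size to count pair and single-segment multiplicities in a single pass over the subsets, then replace heavy pairs (more than $2k$ occurrences) and heavy segments (more than $2k^2$ occurrences) by the corresponding smaller subsets. You additionally spell out the geometric phase (constructing the arrangement and extracting the rectangular cells, with the Euler-formula bound $A=O(n+C)$), which is where the $O(n\log n)$ term actually arises and which the paper leaves implicit, and you substitute worst-case radix/bucket sorting for the paper's expected-time hash tables --- both are strengthenings of detail rather than departures in approach.
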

\begin{proof}
To reduce $\calC$ to $\calC_1$ we count for every pair of line segments $\ell, \ell'$ the number of subsets of $\calC$ that contain both. We do this by making a single pass over all subsets of $\calC$ and maintaining for each pair $\ell, \ell'$ how many subsets contain them thus far. To avoid having to initialize counts for all pairs $\ell, \ell'$, which may be more than linear in the size of $\calC$ we use a hash table and create a new count whenever we encounter a new pair of line segments. Each subset contains at most $4$ segments and at most $6$ pairs of segments, so we can process it in $O(1)$ time.
After counting we can go through all pairs of line segments with non-zero count and for each pair $\ell, \ell'$ that is contained in more than $2k$ subsets we remove the sets (which is easily done by storing which sets contain the pair) and add a new subset $\{\ell, \ell'\}$.
To compute $\calC_2$ from $\calC_1$ we can use a similar construction.
\end{proof}


\begin{theorem}\label{thm:kernel}
For the problem of finding a minimum-size cover for all rectangular cells in an arrangement of $n$ axis-parallel line segments $L$ we can either find a kernel of size $O(k^3)$ or conclude that no solution of at most size $k$ is possible. Moreover, the algorithm runs in $O(n\log n + C)$ time, where $C$ is the number of cells in the arrangement induced by $L$.
\end{theorem}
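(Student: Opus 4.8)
The plan is to assemble the lemmas already proved, since the theorem is essentially their combination. First I would run the construction from the preceding lemma to build $\calC_2$ from $L$ in $O(n\log n + C)$ time; the same pass also produces the per-line subset counts, so no extra work is needed to decide the easy case. Then I would test whether $|\calC_2| > 2k^3$: if so, the lemma stating that $|\calC_2| > 2k^3$ precludes any cover of size at most $k$ for $\calC$ lets me halt and report that no solution of size at most $k$ exists.

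If instead $|\calC_2| \leq 2k^3$, I claim $\calC_2$ is the promised kernel. The number of subsets is $O(k^3)$ by the case assumption. Every subset of $\calC_2$ has at most four elements --- the original rectangular quadruples, or the smaller pair- and singleton-sets introduced by the two reduction steps --- so at most $4|\calC_2| = O(k^3)$ distinct line segments occur in $\calC_2$; restricting $L$ to exactly these segments changes nothing, since no segment outside $\calC_2$ covers any subset of $\calC_2$. Hence both the ground set and the collection have size $O(k^3)$, which is what a ``kernel of size $O(k^3)$'' requires.

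To see that the kernel is faithful, I would chain the two equivalence lemmas: by Lemma~\ref{lem:common2} a set $L'$ with $|L'|\leq k$ is a minimum-size cover of $\calC$ iff it is a minimum-size cover of $\calC_1$, and by Lemma~\ref{lem:common1} iff it is a minimum-size cover of $\calC_2$. Thus an optimal cover of $\calC_2$ of cardinality at most $k$ is an optimal cover of the original instance, and if $\calC_2$ admits no cover of size $\leq k$ then neither does $\calC$. The running time is dominated by the construction of $\calC_2$; the threshold test and the ground-set pruning are a single sweep over $\calC_2$.

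The step I would be most careful about is not hard but easy to overlook: verifying that the equivalence lemmas are stated precisely for the quantity we need, namely minimum-size covers of cardinality at most $k$, so that transporting a solution back through $\calC_2 \to \calC_1 \to \calC$ is legitimate; and checking the ground-set bound, since a kernel must be small in the number of line segments and not merely in the number of cells --- which is exactly where the property that each subset has at most four elements is used.
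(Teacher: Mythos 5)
Your proposal is correct and matches the paper's intended argument: the theorem is stated without a separate proof precisely because it is the assembly of Lemmas~\ref{lem:common2} and~\ref{lem:common1}, the $|\calC_2|>2k^3$ bound, and the construction-time lemma, exactly as you combine them. Your extra care about the ground-set size (each subset having at most four elements) is a reasonable point the paper leaves implicit, but it does not change the route.
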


Since we now have a kernel of size $O(k^3)$ it follows that the problem is fixed parameter tractable~\cite{downey}.

\begin{corollary}\label{cor:fpt-alg}
Given a set $L$ of line segments the problem of finding a minimum size set $L' \subseteq L$ that covers the rectangular cells of the arrangement of $L$ is fixed parameter tractable with respect to the size $k$ of the optimal solution.
\end{corollary}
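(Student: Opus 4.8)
The corollary follows almost immediately by combining Theorem~\ref{thm:kernel} with the standard machinery of parameterized complexity, so the main task is to spell out why a polynomial kernel implies fixed-parameter tractability and to handle the bookkeeping of converting the kernel back into an instance we can actually solve. First I would invoke Theorem~\ref{thm:kernel}: in time $O(n \log n + C)$ we either certify that no cover of size at most $k$ exists (in which case we answer ``no'' and stop), or we obtain an equivalent instance $\calC_2$ with $|\calC_2| = O(k^3)$ subsets, each of size at most four. Crucially, by Lemmas~\ref{lem:common2} and~\ref{lem:common1}, a set $L'$ with $|L'| \le k$ is a minimum-size cover of $\calC_2$ if and only if it is a minimum-size cover of the original collection $\calC$, so it suffices to solve the problem on the kernel.

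Next I would bound the number of distinct line segments appearing in $\calC_2$. Each subset contributes at most four segments, so at most $4 \cdot |\calC_2| = O(k^3)$ segments appear across all subsets of $\calC_2$; any segment not appearing in any subset is irrelevant and can be discarded. Hence the kernelized instance has size polynomial in $k$ and is completely independent of $n$ and $C$.

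Then I would solve the kernel by brute force. Since we only care about covers of size at most $k$, we can enumerate all subsets of the $O(k^3)$ relevant segments of cardinality at most $k$; there are at most $\binom{O(k^3)}{k} = k^{O(k)}$ such candidate sets, and for each we check in time polynomial in $|\calC_2|$ whether it hits every subset of $\calC_2$. (If one prefers, one can instead appeal to the textbook fact that a problem admitting a polynomial kernel is automatically in FPT~\cite{downey}, which is exactly the characterization we need.) Combining the $O(n \log n + C)$ kernelization time with the $k^{O(k)} \cdot \mathrm{poly}(k)$ solving time yields a total running time of the form $f(k) + O(n\log n + C)$, which is fixed-parameter tractable in $k$.

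The only point requiring a small amount of care — and the closest thing to an obstacle — is ensuring that the reductions behind $\calC_2$ preserve \emph{optimality} and not merely feasibility, i.e.\ that an optimal cover of the kernel of size $\le k$ lifts to an optimal cover of $\calC$ of the same size; but this is precisely the ``if and only if'' content of Lemmas~\ref{lem:common2} and~\ref{lem:common1}, so nothing new needs to be proved. One should also note that the singleton and pair subsets introduced during kernelization are themselves hit by the chosen segments, so the output $L'$ is a genuine cover of $\calC$; this is immediate from the construction. With these observations in place the corollary is established.
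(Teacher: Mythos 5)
Your argument is correct and matches the paper's: the paper simply cites Theorem~\ref{thm:kernel} together with the standard fact that a polynomial kernel implies fixed-parameter tractability~\cite{downey}, which is exactly the reasoning you spell out (including the brute-force enumeration over the $O(k^3)$-size kernel). No gap; you have just made explicit what the paper leaves to the cited reference.
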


\section{Rectangular subdivisions}
\label{sec:rect-bsp}

Although the problem of finding a minimum set of covering segments is NP-hard, there are special cases where the problem can be solved in polynomial time. One such case is that the input line segments form a special type of rectangular subdivision. The rectangular subdivisions we consider are those defined by a KD-tree~\cite{bcko}. That is, a recursive subdivision of a rectangle using horizontal or vertical splitting segments (see Fig.~\ref{fig:rectilinear-bsp}). Note that the segments that have only an endpoint on the boundary of a cell are not considered part of the boundary of that cell. From now on we refer to such a subdivision simply as a \emph{rectangular subdivision}.

\begin{figure}
\centering
\includegraphics{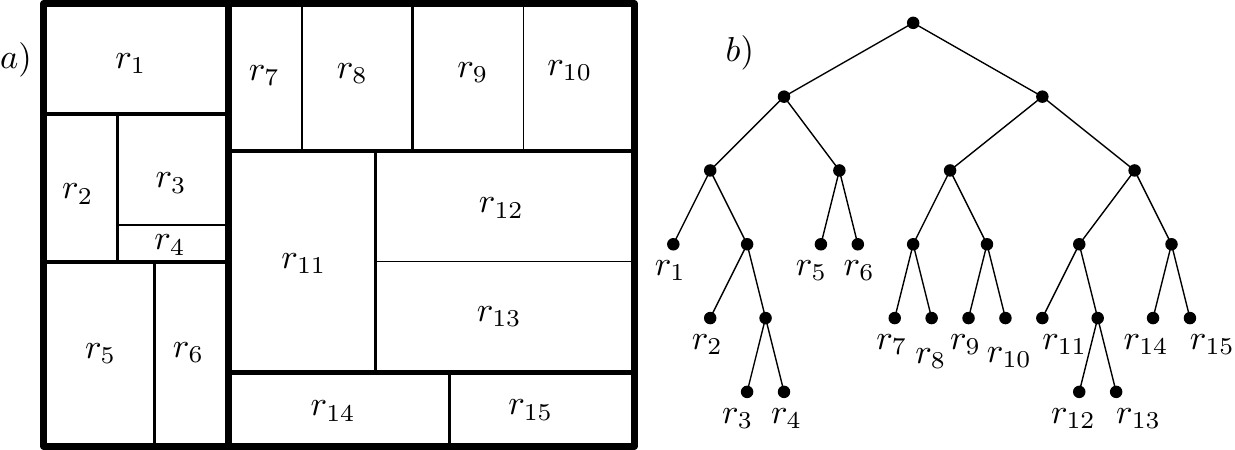}
\caption{a) A rectilinear binary space partition within a rectangle. b) A possible tree representing the partition.}
\label{fig:rectilinear-bsp}
\end{figure}

A rectangular subdivision provides a clear tree-structure which we can use to compute an optimal covering in a bottom up fashion. Each node in the tree is associated to some rectangle. For a leaf-node this rectangle is a cell of the arrangement, whereas for an interior node its associated rectangle $r$ is formed by the union of the rectangles associated to its children. We also associate an interior node of the tree with a horizontal or vertical segment that splits its associated rectangle into two rectangles associated to its children.

Although the above FPT approach works we show there is a much faster exact algorithm for rectangular subdivision. Without loss of generality we assume that the subdivision is given as a binary tree (the KD-tree structure)\footnote{If the structure is not given as a binary tree, but a more general subdivision structure such as a double-connected edge list, we can construct the tree in linear time.}. We show that an optimal covering can then be computed in linear time. Note that in this definition the outer face is covered if and only if at least one of the edges of the bounding rectangle is in the cover, and any other cell is covered if one of the segments defining its boundary is chosen as a covering segment.

\begin{theorem}
Given the tree structure of a rectangular subdivision, where each node stores the rectangle it represents and its splitting segment, we can compute in linear time a segment-cover of the cells with a minimum size.
\end{theorem}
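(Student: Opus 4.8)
The plan is to compute an optimal cover by a single bottom-up pass over the KD-tree, using dynamic programming where the state of a node records, for its associated rectangle $r$, the minimum number of segments needed to cover all cells strictly inside $r$ together with a small amount of information about which of the four sides of $r$ are already "touched" by the chosen segments. The key observation is that a cell outside $r$ is influenced by the choice inside $r$ only through the four boundary segments of $r$: a segment interior to $r$ can only cover cells contained in $r$. Moreover, the boundary of $r$ consists of at most four maximal segments (the pieces of ancestor splitting segments that bound $r$), so the relevant external interface has bounded size.

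Concretely, I would define for each node $v$ with rectangle $r$ a table indexed by a subset $S \subseteq \{\mathrm{left}, \mathrm{right}, \mathrm{top}, \mathrm{bottom}\}$ of sides of $r$, where the entry $T_v[S]$ is the minimum number of segments, chosen from among the splitting segments strictly inside $r$, that cover every cell contained in $r$ \emph{under the assumption} that the sides in $S$ are "for free" already covered (because some ancestor chose the corresponding boundary segment). For a leaf node, $r$ is a single cell, so $T_v[S]=0$ if $S\neq\emptyset$ and $T_v[\emptyset]=1$ -- but note we never actually pay this $1$ at a leaf; instead the cost of covering a leaf cell is charged to whichever segment on its boundary gets selected, which is always a splitting segment of some ancestor (or an edge of the outer rectangle). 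To make this clean I would instead let the state also track, for the \emph{splitting segment} $\sigma_v$ stored at $v$, whether $\sigma_v$ is selected; then $T_v[S]$ combines the two child tables where the child whose side lies along $\sigma_v$ gets that side marked as covered iff $\sigma_v$ is selected, adding $1$ to the cost when $\sigma_v$ is selected. The recurrence at an interior node $v$ with children $v_1,v_2$ split by $\sigma_v$ (say a vertical split, so $v_1$ is the left part, $v_2$ the right part) is
\[
T_v[S] \;=\; \min\Bigl\{\,T_{v_1}[S_1]+T_{v_2}[S_2],\; 1+T_{v_1}[S_1\cup\{\mathrm{right}\}]+T_{v_2}[S_2\cup\{\mathrm{left}\}]\Bigr\},
\]
where $S_1$ (resp. $S_2$) is $S$ restricted to the sides of $r$ that are also sides of $r_1$ (resp. $r_2$); the first term corresponds to not selecting $\sigma_v$ and the second to selecting it, in which case the shared wall is covered for both children. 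The answer for the whole instance is obtained at the root by taking $T_{\mathrm{root}}[\emptyset]$ but remembering that the root rectangle's four boundary edges are themselves selectable segments; since the outer face is covered iff one of those four edges is in the cover, I would pad the root with a tiny amount of extra bookkeeping (equivalently, attach a virtual parent offering the option of paying $1$ for an outer edge that covers both the outer face and marks all four root sides).

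Correctness follows by induction on the tree: any valid cover, restricted to the segments strictly inside $r$, covers all cells of $r$ except possibly those whose covering segment is a boundary segment of $r$, and the state $S$ exactly captures which boundary segments the cover uses; the recurrence is a faithful case split on whether $\sigma_v$ is chosen. Each table has at most $2^4=16$ entries and each node does $O(1)$ work to combine its children's tables, so the total running time is linear in the number of nodes, which is linear in the number of cells. I expect the main obstacle to be getting the boundary bookkeeping exactly right: one must be careful that the "four sides" of a rectangle $r$ may each be a proper sub-segment of a long ancestor splitting segment, so a single selected ancestor segment can simultaneously mark sides of many rectangles, and also that a side of $r$ might be part of the outer bounding rectangle; handling these cases uniformly (and verifying that a segment interior to $r$ truly cannot cover any cell outside $r$, which uses that the subdivision is a proper recursive partition) is the delicate part, whereas the DP itself is routine once the interface is pinned down.
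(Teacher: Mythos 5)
Your proposal is correct and takes essentially the same approach as the paper: a bottom-up dynamic program over the KD-tree with $2^4=16$ states per node recording the status of the four boundary sides of the node's rectangle, combined at each interior node by a two-way case split on whether the splitting segment is selected, giving constant work per node and hence linear time. The only difference is a cosmetic choice of accounting (you charge the splitting segment at the node that owns it and treat ancestor-covered sides as free, while the paper parametrizes each subproblem by the subset of boundary segments included in the cover), and your flagged concerns about sides being sub-segments of ancestor splitters and about the outer face at the root are exactly the bookkeeping the paper's formulation also handles.
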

\begin{proof}
Starting at the leaves, we compute an optimal covering in a bottom up fashion. For each node $v$ of the tree we compute the solution to sixteen different subproblems and store these solutions in the corresponding nodes. Let $R$ be the rectangle associated to a given node, and let $\{s_1, \ldots, s_4\}$ be the four segments that define its boundary. For any subset $S' \subseteq \{s_1, \ldots, s_4\}$, we consider the subproblem of finding the smallest covering of all the cells within $R$ that contains the segments of $S'$. For each such subproblem its cardinality as well as how it is constructed from solutions of its children (this second part is needed to reconstruct the optimal solution).

Clearly, if $v$ is a leaf the optimal cover is simply $S'$ (unless $S'=\emptyset$ in which case there is no solution). For an interior node we proceed as follows: let $v$ be an interior node of the tree and $R$ the rectangle stored at $v$. Without loss of generality assume that $R$ is split by a vertical line $\ell$. The children of $v$ are $v_{\myleft}$ and $v_{\myright}$ with corresponding rectangles $R_{\myleft}$ and $R_{\myright}$. We must compute a minimum-size cover for each possible choice of the top, left, bottom and right edges of $R$. Note that a fixed choice of boundary edges for $r$ already forces a choice of three edges for $R_{\myleft}$ and $R_{\myright}$. Only their shared edge $\ell$ is not fixed by the choice of boundary edges of $r$. However, we can simply try both options and see which results in the overall better cover of $R$. That is, we first assume $\ell$ is not part of the cover and retrieve our already computed solutions for $R_{\myleft}$ and $R_{\myright}$ for the current selection of boundary edges. Then we do the same when we do pick $\ell$ and choose the solution with the smallest number of edges. This results in an optimal solution since the only edges shared by $R_{\myleft}$ and $R_{\myright}$ are $\ell$ and the top and bottom edge of $R$. We consider all possible selections of these edges. For each selection the two subproblems of finding an optimal cover for the $R_{\myleft}$ and $R_{\myright}$ are independent, so we can reuse our previously computed solutions.

We now show that the algorithm indeed runs in linear time. Observe that only a constant number of subproblems are considered at each node of the tree. Moreover, each of these subproblems is solved in constant time by accessing the solution to a constant number of subproblems. Thus, overall we spent a constant amount of time per node of the tree, giving the desired bound.
\end{proof}

\section{Conclusions and open problems}

Our results show that covering cells in an arrangement, similar to the original set-cover problem, may be NP-hard or polynomial-time solvable depending on various restrictions. It may be interesting to investigate further variants of the problem to see which restriction makes the problem polynomial-time solvable---this may be due to the lack of intersections between line segments or due to the tree-structure of the subdivision. It would also be interesting to see if good approximations are possible for the more general case or even the case with line segments of arbitrary orientations.

\small{

}

\end{document}